\begin{document}
\title{N-person quantum Russian roulette}
\author{Piotr Fr\c{a}ckiewicz$^1$ and Alexandre G M Schmidt$^2$
\address{$^1$ Institute of Mathematics of the Polish
Academy of Sciences\\ 00-956 Warsaw, Poland,\\ P.Frackiewicz@impan.gov.pl\\
$^2$ Departamento de F\'isica --- Universidade Federal
Fluminense\\ R. Des. Ellis Hermydio Figueira, 783, Volta Redonda,
RJ, Brazil, 27215-350,  agmschmidt@gmail.com }}\maketitle
\newtheorem{lemma}{Lemma}
\newtheorem{definition}{Definition}[section]
\newtheorem{fact}{Fact}[section]
\newtheorem{theorem}{Theorem}
\newtheorem{proposition}{Proposition}[definition]
\newtheorem{algorithm}{Algorithm}[section]
\newtheorem{example}[definition]{Example}
\newtheorem{corollary}[definition]{Corollary}
\newenvironment{proof}{\noindent\textit{Proof.}}
{\nolinebreak[4]\hfill$\square$\\\par}

\begin{abstract}
We generalize the concept of quantum Russian roulette introduced
in [A.G.M. Schmidt, L. da Silva, {\it Physica} A {\bf 392} (2013)
400-410]. Our model coincides with the previous one in the case of
the game with two players and gives the suitable quantum
description for any finite number of players.
\end{abstract}

%Uncomment for PACS numbers title message
%\pacs{00.00, 20.00, 42.10}
% Keywords required only for MST, PB, PMB, PM, JOA, JOB?
%\vspace{2pc}
%noindent{\it Keywords}: Article preparation, IOP journals
% Uncomment for Submitted to journal title message
%\submitto{\JPA}
% Comment out if separate title page not required
\section{Introduction}

In order to model problems involving conflicts among individuals
von Neumann and Morgenstern \cite{vonneumann} introduced the
mathematical concept of games. These ideas found applications in
several fields of science, from Economics to Social Sciences and
reached the quantum domain back in 1999, when Meyer \cite{meyer}
and  Eisert, Wilkens and Lewenstein \cite{eisert} elaborated
quantum versions of a coin-flip game and of the prisoner dilemma
(PD), respectively. The classical PD game is the keystone for
modelling cooperative behaviour between animals, economic agents,
strategies for iterated games \cite{dyson} and even some RNA
virus\cite{nowak}. Quantum mechanics can solve such dilemma, as
were shown in reference \cite{eisert}, if one player could use a
quantum strategy --- quantum in the sense of an strategy that is
not cooperate nor defect. Experimentally the PD was realized using
NMR by Du and co-workers \cite{nmr} and recently using optical
techniques by Pinheiro {\it et al} \cite{pinheiro}. Several papers
followed these two seminal works, we can mention Monty-Hall
problem \cite{flitney}, discoordination games \cite{samaritan},
repeated quantum games \cite{piotr,donangelo}, multiplayer quantum
games \cite{multiplayer}, quantum auctions \cite{auction}, quantum
dating market \cite{dating} and minority game both theoretically
\cite{challet} and experimentally \cite{minority}.

On the other hand, game theory can model also not only cooperative
behaviour but conflict situations. This is the case of the
so-called quantum duel studied by Flitney and Abbott
\cite{flitney2} and revisited by Schmidt and Paiva \cite{schmid}.
In this game each player, Alice and Bob for instance, has a qubit
$|\psi\rangle = c_1|1\rangle + c_2|0\rangle$ where $|1\rangle$
represents the ``alive'' state and $|1\rangle$ the ``dead'' state,
and their only objective is to flip his/hers opponent's spin.
Following this idea Schmidt and da Silva proposed the quantum
version of the gamble known as Russian roulette, where players
shoot themselves at point blank range using a gun loaded with just
one bullet \cite{schmidt}. The authors found some interesting
results concerning the cases where the gun was fully loaded as
well as when there was just one quantum bullet inside its
chambers. However, in the case of 3-person game, one can find only
sketch of a possible form of the scheme rather than the exact
protocol. The framework, however, can lead to quite different
models, and some of them inconsistent with the 2-person quantum
roulette. The previous analysis suggests a possibility of studying
the game with larger number of players as the Authors construct
some particular case of quantum description of the 3-person.  Our
research is aimed at showing that there exists a natural extension
of the two-person quantum Russian roulette to the $n$-person case.

A formal theory for quantum games was initiated with the concepts
of quantum matrix games \cite{eisert, marinatto}. However, quantum
methods have found application in many other types of games and
decision problems from then on. Apart from the quantum scheme for
extensive games \cite{fracor3} one can find such quantized
problems as the quantum Monty Hall problem
\cite{lu,flitney,ariano} and the quantum roulette
\cite{wang,salimi}. The essential for our studying are ones
concerning quantum duels and truels \cite{flitney2,schmid}. It was
shown in paper \cite{schmidt} that the schemes for quantum duels
and truels can be adapted quantum Russian roulette where the
Authors provide us with the concept for two and three-person case.

%It can be found a clever way to define a protocol for two-person
%quantum Russian roulette in \cite{schmidt}.  The Authors follow
%the construction designed for quantum duel by Flitney and Abbott
%and then study differences between classical and quantum playing
%the Russian roulette.

The outline for our paper is the following: in
section~\ref{twopersonroulette} we review the two-person quantum
Russian roulette. Section~\ref{nroulette} is devoted to present
the generalized operators for the $n-$person case and to discuss
some illustrative examples. In the final section we conclude the
work.

\section{Two-person quantum Russian roulette}\label{twopersonroulette} According to
\cite{schmidt} the scheme for 2-person quantum Russian roulette is
described as follows: each player has a qubit and the game is
played in a Hilbert space $\mathds{C}^2 \otimes \mathds{C}^2$. The
game begins with both players alive, namely $|\psi_0\rangle =
|11\rangle$, where $i$-th qubit belongs to $i$-th player for
$i=1,2$. The referee prepares operators $O_{1}$ and $O_{2}$ (the
operators are also labelled $O_{i}(\gamma_{j})$ if the angle that
the operator depends on is relevant),
\begin{align}\label{operatoryschmidta}\begin{aligned}
O_{1} &= [e^{-i\alpha}\sin(\gamma/2)|11\rangle +
ie^{i\beta}\cos(\gamma/2)|01\rangle]\langle 11|\\
&\quad+[e^{i\alpha}\sin(\gamma/2)|01\rangle +
ie^{-i\beta}\cos(\gamma/2)|11\rangle ]\langle 01| +
|00\rangle \langle 00| + |10\rangle \langle 10|;\\
O_{2}&= [e^{-i\alpha}\sin(\gamma/2)|11\rangle +
ie^{i\beta}\cos(\gamma/2)|10\rangle]\langle 11|\\
&\quad+[e^{i\alpha}\sin(\gamma/2)|10\rangle +
ie^{-i\beta}\cos(\gamma/2)|11\rangle ]\langle 10| + |00\rangle
\langle 00| + |01\rangle \langle 01|.\end{aligned}
\end{align}
which the first and the second player act on the initial state.
The game proceeds in rounds (the referee is allowed to settle any
number of rounds). The state after $k$ rounds, $k=1,2,\dots,m$ is
defined recursively,
\begin{equation}\label{rekurencyjna}\begin{aligned}
&|\psi_{0}\rangle = |11\rangle,\\
&|\psi_{k}\rangle =
O_{2}(\gamma_{2k})O_{1}(\gamma_{2k-1})|\psi_{k-1}\rangle,
\end{aligned}
\end{equation}
and the outcome after $k-$th round is given by a measurement with
respect to the computational basis,
\begin{equation}\label{generalpayoff}
\langle S \rangle = \sum_{i,j=0,1}s_{ij}|\langle
ij|\psi_{k}\rangle|^2,
\end{equation}
where $s_{ij}$ is the outcome corresponding to the measured state
$|ij\rangle$. In particular, the measurement can be identified
with a payoff function. For example, in \cite{schmidt}, the
Authors studied the quantum Russian roulette with payoff functions
\begin{equation}\begin{aligned}
&\langle \mathdollar_{1} \rangle = \frac{1}{2}\left[1 + |\langle
10|\psi_{m}\rangle|^2 - |\langle 01|\psi_{m}\rangle|^2-|\langle
00|\psi_{m}\rangle|^2\right];\\
&\langle \mathdollar_{2} \rangle = \frac{1}{2}\left[1 + |\langle
01|\psi_{m}\rangle|^2 - |\langle 10|\psi_{m}\rangle|^2-|\langle
00|\psi_{m}\rangle|^2\right].
\end{aligned}
\end{equation}
which means that each of the players wins 1 if he is only one who
survived and 1/2 if both players are alive. Otherwise, they win
nothing. Note that, other payoff functions can be defined as well.
For example, the case that a player receives payoff 1 only when he
is the only survivor and the other player `receives' -1 is equally
natural, especially with reference to the classical game, where
the draw is not normally possible, i.e,
\begin{align}
\langle \mathdollar_{1} \rangle = |\langle 10|\psi_{m}\rangle|^2-
|\langle 01|\psi_{m}\rangle|^2, &\quad \langle \mathdollar_{2}
\rangle = |\langle 01|\psi_{m}\rangle|^2 - |\langle
10|\psi_{m}\rangle|^2.
\end{align}
Scheme (\ref{operatoryschmidta})-(\ref{generalpayoff}) generalizes
the classical 2-person Russian roulette with a gun having two
chambers in the barrel. Indeed, let us assume that $\gamma_{i} =
0$ represents a bullet in the player~$i$'s chamber and $\gamma_{i}
= \pi$ means that $i$-th chamber is empty. Then a result
corresponding to each of four possible cases in the classical
game, that is, when the barrel of the gun is full, a bullet is in
the first or the second chamber and the barrel is empty, can be
reconstructed with the results of scheme
(\ref{operatoryschmidta})-(\ref{generalpayoff}) for $(\gamma_{1},
\gamma_{2}) \in \{(0,0), (0,\pi), (\pi,0), (\pi, \pi)\}$. From
(\ref{rekurencyjna}) we have
\begin{align}\label{2resultingstates}\begin{aligned}
O_{2}(0)O_{1}(0)|11\rangle &= ie^{i\beta}|01\rangle, &
O_{2}(0)O_{1}(\pi)|11\rangle &= ie^{i(-\alpha + \beta)}|10\rangle;
\\  O_{2}(\pi)O_{1}(0)|11\rangle &=
ie^{i\beta}|01\rangle, & O_{2}(\pi)O_{1}(\pi)|11\rangle
&=e^{-2i\alpha}|11\rangle,\end{aligned}
\end{align} Thus, if the
states $|0\rangle$ and $|1\rangle$ are assumed to represent {\it
dead} and {\it alive} state, respectively, the measurement outcome
of each of the resulting states~(\ref{2resultingstates})
identifies with the classical results. For example, if the bullet
is inside the second player's chamber, the game ends with the
death of the second player. In the quantum case it means that
operators (\ref{operatoryschmidta}) are prepared with
$(\gamma_{1}, \gamma_{2}) = (\pi,0)$, and the measurement on
$O_{2}(0)O_{1}(\pi)|11\rangle$ outputs $|10\rangle$.

It worth noting that, in general, scheme
(\ref{operatoryschmidta})-(\ref{generalpayoff}) takes into
consideration the case in which a bullet in each chamber $i$ may
be found with some probability $p_{i}$. Indeed, the final state
associated with the general form of operators
(\ref{operatoryschmidta}) takes on the form
\begin{multline}\label{OOstate}
O_{2}(\gamma_{2})O_{1}(\gamma_{1})|11\rangle =
ie^{i\beta}\cos(\gamma_{1}/2)|01\rangle\\ + ie^{-i(\alpha
-\beta)}\sin(\gamma_{1}/2)\cos(\gamma_{2}/2)|10\rangle +
e^{-2i\alpha}\sin(\gamma_{1}/2)\sin(\gamma_{2}/2))|11\rangle
\end{multline}
The result given by the measurement (\ref{generalpayoff}) on state
(\ref{OOstate}) coincides with the classical result if we assume
$p_{i} \equiv \cos^2(\gamma_{i}/2)$. For example, the case where
the first player is alive and the second player fails at the end
of the first round takes place with probability $(1-p_{1})q_{1}$,
and the quantum counterpart $|10\rangle$ is measured with
probability $\sin^2(\gamma_{1}/2)\cos^2(\gamma_{2}/2)$.
\section{N-person quantum Russian roulette}\label{nroulette} Now, we extend
scheme (\ref{operatoryschmidta})-(\ref{generalpayoff}) to consider
the game with more than two players. We stick to the Russion
roulette in which the referee prepares the gun with one chamber
for each player, and a player finds a bullet in his own chamber
with some probability.

 The previous section shows that the proper definition of
2-person quantum Russian roulette lies in a construction of
suitable operators (\ref{operatoryschmidta}). Now, we use the same
reasoning for the $n$-person game.
 Let us denote by $\mathds{1}$ and $U_{i}$ the unitary operators on
$\mathbb{C}^2$ given by formulae
\begin{equation}\begin{aligned}
&\mathds{1} = |0\rangle \langle 0| + |1\rangle \langle 1|;\\
&U_{i} = \sin(\gamma_{i}/2)(
 e^{i\alpha}|0\rangle \langle 0| + e^{-i\alpha}|1\rangle \langle
 1|) + i\cos(\gamma_{i}/2)(e^{i\beta}|0\rangle \langle 1| +
e^{-i\beta}|1\rangle \langle 0|).
\end{aligned} \end{equation} Then operators (\ref{operatoryschmidta}) can be written as follows:
\begin{align}
O_{1} = \mathds{1}\otimes|0\rangle \langle 0| + U_{1} \otimes
|1\rangle \langle 1|, &\quad O_{2} = |0\rangle \langle 0| \otimes
\mathds{1} + |1\rangle \langle 1| \otimes U_{2},
\end{align}
The first term of each $O_{i}$ leaves the state unchanged if both
players are initially dead $(|00\rangle)$ or only player $i$ is
alive $(|10\rangle)$. Otherwise, the second term performs unitary
operation on player $i$'s qubit leaving the opponent's qubit
unchanged. It can be done the same in $n$-person case. That is,
player $i$'s operator $O_{i}$ does nothing if either only player
$i$ is alive or all the players are dead, and $O_{i}$ performs
unitary transformation only on player $i$'s qubit.
\begin{definition}\label{defnperson}
An $n$-person quantum Russian roulette with $m$ rounds is defined
by the following components:
\begin{itemize}
\item the players' operators
\begin{align}\label{trzyoperators}\begin{aligned}
&O_{1}=\mathds{1} \otimes (|0\rangle \langle 0|)^{\otimes n-1} +
\sum_{\substack{(i_{1},\dots,i_{n-1})\in \{0,1\}^{n-1}\\
(i_{1},\dots,i_{n-1}) \ne (0,\dots,0)}}U_{1} \otimes |i_{1} \dots
i_{n-1}\rangle \langle i_{1} \dots i_{n-1}|;\\
&O_{2} = |0\rangle \langle 0 | \otimes \mathds{1} \otimes
(|0\rangle \langle 0|)^{\otimes n-2} + \sum_{\substack{(i_{1},\dots,i_{n-1})\in \{0,1\}^{n-1}\\
(i_{1},\dots,i_{n-1}) \ne (0,\dots,0)}}|i_{1}\rangle \langle
i_{1}| \otimes U_{2} \otimes |i_{2} \dots i_{n-1}\rangle \langle
i_{2}
\dots i_{n-1}|;\\ & ~~ \vdots\\
&O_{n} = (|0\rangle \langle 0|)^{\otimes n-1} \otimes \mathds{1}+ \sum_{\substack{(i_{1},\dots,i_{n-1})\in \{0,1\}^{n-1}\\
(i_{1},\dots,i_{n-1}) \ne (0,\dots,0)}} |i_{1} \dots
i_{n-1}\rangle \langle i_{1} \dots i_{n-1}| \otimes U_{n}.
\end{aligned}\end{align}
\item the final state $|\psi_{k}\rangle$ at $k-$th round,
$k=1,\dots,m$,
\begin{align}\label{mstate}\begin{aligned}
&|\psi_{0}\rangle = |1\rangle^{\otimes n}\\
&|\psi_{k}\rangle = O_{n}(\gamma_{kn}) \cdots
O_{2}(\gamma_{(k-1)n+2})O_{1}(\gamma_{(k-1)n+1})|\psi_{k-1}\rangle
\end{aligned}
\end{align}
\item the measurement after $k$ round,
\begin{equation}\label{fmeasurement}
S = \sum_{(i_{1},\dots,i_{n}) \in
\{0,1\}^{n}}s_{i_{1},\dots,i_{n}}|\langle
i_{1},\dots,i_{n}|\psi_{k}\rangle|^2.
\end{equation}
\end{itemize}
\end{definition}
The clear way of describing (\ref{trzyoperators}) allows us to
easily prove that
\begin{proposition}
Operators defined by (\ref{trzyoperators}) are unitary.
\end{proposition}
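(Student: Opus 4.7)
The plan is to exploit the block-diagonal structure of each $O_i$. First I would observe that
\[
\sum_{(i_{1},\dots,i_{n-1})\neq(0,\dots,0)}|i_{1}\dots i_{n-1}\rangle\langle i_{1}\dots i_{n-1}|
= I^{\otimes(n-1)} - (|0\rangle\langle 0|)^{\otimes(n-1)},
\]
so each $O_i$ can be written (after the obvious reordering of tensor factors that places player $i$'s qubit first) in the compact form
\[
O_i \;\cong\; \mathds{1}\otimes P \;+\; U_i\otimes Q,
\qquad
P:=(|0\rangle\langle 0|)^{\otimes(n-1)},\quad Q:=I^{\otimes(n-1)}-P,
\]
where $P,Q$ are mutually orthogonal self-adjoint projectors on the other $n-1$ qubits satisfying $PQ=QP=0$ and $P+Q=I^{\otimes(n-1)}$.

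Second, I would verify that $U_i$ itself is unitary on $\mathbb{C}^2$. Writing $U_i$ as the $2\times2$ matrix
\[
U_i=\begin{pmatrix} e^{i\alpha}\sin(\gamma_i/2) & ie^{i\beta}\cos(\gamma_i/2)\\ ie^{-i\beta}\cos(\gamma_i/2) & e^{-i\alpha}\sin(\gamma_i/2)\end{pmatrix},
\]
a direct computation of $U_iU_i^{\dagger}$ and $U_i^{\dagger}U_i$ yields the identity thanks to $\sin^2(\gamma_i/2)+\cos^2(\gamma_i/2)=1$, with the off-diagonal entries canceling because of the factors of $\pm i$.

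Third, combining the previous two steps, I would compute
\[
O_iO_i^{\dagger}
=(\mathds{1}\otimes P+U_i\otimes Q)(\mathds{1}\otimes P+U_i^{\dagger}\otimes Q)
=\mathds{1}\otimes P^2+U_iU_i^{\dagger}\otimes Q^2+\text{cross terms},
\]
where the two cross terms vanish by $PQ=QP=0$. Using $P^2=P$, $Q^2=Q$, $U_iU_i^{\dagger}=\mathds{1}$, and $P+Q=I^{\otimes(n-1)}$, the right-hand side collapses to $\mathds{1}\otimes(P+Q)=I^{\otimes n}$. The computation of $O_i^{\dagger}O_i$ is identical.

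There is no real obstacle here; the only point that requires care is bookkeeping for $1<i<n$, where player $i$'s qubit sits in the middle of the tensor product. However, unitarity is invariant under the permutation of tensor factors that moves player $i$'s qubit to the first slot (such a permutation is implemented by a unitary swap), so this reordering is harmless and the argument given above applies verbatim to every $O_i$.
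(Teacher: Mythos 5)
Your proof is correct and follows essentially the same route as the paper: both rest on the identity $\sum_{(i_1,\dots,i_{n-1})\neq(0,\dots,0)}|i_1\dots i_{n-1}\rangle\langle i_1\dots i_{n-1}| = I^{\otimes(n-1)}-(|0\rangle\langle 0|)^{\otimes(n-1)}$ together with the unitarity of $\mathds{1}$ and $U_i$ and the mutual orthogonality of the projector blocks. You are merely more explicit than the paper in checking that $U_i$ is unitary, in noting that the cross terms vanish, and in handling the tensor-factor ordering for $1<i<n$ (which the paper dispatches with ``without loss of generality'').
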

\begin{proof}
Without loss of generality we prove that $O_{1}$ is a unitary on
$(\mathbb{C}^{2})^{\otimes n}$. Since $\mathds{1}$ and $U_{1}$ are
unitary operators we obtain
\begin{equation}
O_{1}^{\dag}O_{1} = O_{1}O_{1}^{\dag} = \mathds{1} \otimes
(|0\rangle \langle
0|)^{\otimes n-1} + \sum_{\substack{(i_{1},\dots,i_{n-1})\in \{0,1\}^{n-1}\\
(i_{1},\dots,i_{n-1}) \ne (0,\dots,0)}}\mathds{1} \otimes |i_{1}
\dots i_{n-1}\rangle \langle i_{1} \dots i_{n-1}|.
\end{equation}
But
\begin{equation}
\sum_{\substack{(i_{1},\dots,i_{n-1})\in \{0,1\}^{n-1}\\
(i_{1},\dots,i_{n-1}) \ne (0,\dots,0)}}\mathds{1} \otimes |i_{1}
\dots i_{n-1}\rangle \langle i_{1} \dots i_{n-1}| =
\mathds{1}^{\otimes n} - \mathds{1} \otimes (|0\rangle \langle
0|)^{\otimes n-1},
\end{equation}
which implies that $O_{1}^{\dag}O_{1} = O_{1}O_{1}^{\dag} =
\mathds{1}^{\otimes n}$.
\end{proof}
The scheme defined by (\ref{trzyoperators})-(\ref{fmeasurement})
generalizes a Russian roulette in a similar way as scheme
(\ref{operatoryschmidta})-(\ref{generalpayoff}) does.
\begin{example}
\textup{Let us consider a 3-person quantum Russian roulette with a
single round given by definition~\ref{defnperson}. Putting $n=3$
and $k=1$ into (\ref{trzyoperators}) and (\ref{mstate}) we obtain
the following final state:}
\begin{equation}\label{trzystate}
\begin{aligned} |\psi_{1}\rangle = O_{3}O_{2}O_{1}|111\rangle = &-e^{2\beta
i}\cos(\gamma_{1}/2)\cos(\gamma_{2}/2)|001\rangle\\ &-e^{(-\alpha
+
2\beta)i}\cos(\gamma_{1}/2)\sin(\gamma_{2}/2)\cos(\gamma_{3}/2)|010\rangle\\
&\quad + ie^{(-2\alpha +
\beta)i}\cos(\gamma_{1}/2)\sin(\gamma_{2}/2)\sin(\gamma_{3}/2)|011\rangle\\
&-e^{(-\alpha +
2\beta)i}\sin(\gamma_{1}/2)\cos(\gamma_{2}/2)\cos(\gamma_{3}/2)|100\rangle\\
&\quad + ie^{(-2\alpha +
\beta)i}\sin(\gamma_{1}/2)\cos(\gamma_{2}/2)\sin(\gamma_{3}/2)|101\rangle\\
&\quad + ie^{(-2\alpha +
\beta)i}\sin(\gamma_{1}/2)\sin(\gamma_{2}/2)\cos(\gamma_{3}/2)|110\rangle\\
&\quad+
e^{-3i\alpha}\sin(\gamma_{1}/2)\sin(\gamma_{2}/2)\sin(\gamma_{3}/2)|111\rangle.
\end{aligned}
\end{equation}
\textup{Bearing in mind that $\gamma_{i}$ equal to $0$ and $\pi$
are identified with an empty and full chamber of  player $i$,
respectively, and state $|0\rangle$ represents {\it dead} state
whereas $|1\rangle$ represents {\it alive} state, let us examine
the four possible deterministic cases:
\begin{itemize}
\item {\it No bullets} \quad An empty barrel in the gun makes all
the players alive in the classical game. The quantum counterpart
is accomplished then by taking $(\gamma_{1},
\gamma_{2},\gamma_{3}) = (\pi, \pi, \pi)$ in scheme
(\ref{trzyoperators})-(\ref{fmeasurement}) as the final state
$|\psi_{1}\rangle$ takes on the form
$O_{1}(\pi)O_{2}(\pi)O_{3}(\pi)|111\rangle =
e^{-3i\alpha}|111\rangle$. \item {\it One bullet} \quad The game
with exactly one bullet in chamber $i$ always ends with the dead
of player $i$ in the classical case. The same is obtained by means
of the quantum scheme. For example, if the second player has got a
bullet i.e. $(\gamma_{1}, \gamma_{2},\gamma_{3}) = (\pi, 0, \pi)$,
the state $|101\rangle$ since the final state is
$O_{1}(\pi)O_{2}(0)O_{3}(\pi)|111\rangle = ie^{(-2\alpha +
\beta)i}|101\rangle$. \item {\it Two bullets} \quad There are two
ways to play the 3-person Russian roulette, where at most one
chamber in the barrel is left empty. The game can be playing until
either one of the players shoots himself or only one player stays
alive. Definition~\ref{defnperson} concerns the latter case. For
example, if player 1's chamber is empty while the other players
have got bullets in their chambers, the first player is the only
survivor in the end of the classical game. The same occurs in the
quantum case since $(\gamma_{1}, \gamma_{2},\gamma_{3}) = (\pi,
0,0)$ implies the measurement result $|100\rangle$. \item {\it
Three bullets} \quad If $(\gamma_{1}, \gamma_{2},\gamma_{3}) = (0,
0,0)$, i.e., the gun is fully loaded, the scheme given by
definition~\ref{defnperson} also imitates the classical game as
the final state is $-e^{2\beta i}|001\rangle$. Since the third
player is the last to pull the trigger and his opponents' chambers
are full, player 3 becomes the only survivor just before his turn
to fire the gun, and therefore he wins the game.
\end{itemize}
Scheme (\ref{trzyoperators})-(\ref{fmeasurement}) generalizes the
Russian roulette game even if the referee prepares bullets in the
gun in a stochastic way, that is, if player $i$ finds a bullet in
his chamber with probability $p_{i}$. Similarly to the 2-person
case if it assumed that $p_{i}\equiv \cos(\gamma_{i}/2)$, the
results of the measurement on (\ref{trzystate}) are consistent
with the results of the classical game.}
\end{example}
Like in the case of quantum duels and truels, if $|\psi_{0}\rangle
= |1\rangle^{\otimes n}$, it is not possible to obtain superior
results in the quantum Russian roulette playing only one round.
However, if the game is played two or more rounds (without
measuring state $|\psi_{k}\rangle$ before the last round is
played), the results in the quantum game can differ significantly
from the corresponding results in the classical game.
\begin{example}\textup{
To reduce necessary calculations, let us make an assumption that
the players play two rounds and the referee prepares a gun in
which each player finds a bullet in their chamber with equal
probability. In other words, following formula (\ref{mstate}), for
$i=1,2,3$, player $i$ manipulates with $O_{i}(\gamma_{i})$ and
$O_{i}(\gamma_{i+3})$ such that $\gamma_{i},\gamma_{i+3} =
\pi/2$.}

\textup{Let us ask a question: what is the average probability
that all the players remain alive after two rounds. As probability
of measuring state $|111\rangle$ after two rounds is given by
$|\langle 111|\psi_{2}\rangle|^2$ in the quantum case, it is
sufficient to determine only the amplitude $a_{111}$ associated
with basis state $|111\rangle$ instead of the whole state
$|\psi_{2}\rangle$. The amplitude takes on the following form:}
\begin{equation}
%a_{111} =
%\frac{ie^{i\beta}\left(-\frac{1}{4}ie^{-i(\alpha+\beta)}+
%\frac{e^{-i\alpha}\left(-\frac{1}{4}ie^{-i(\alpha + \beta)}+
%\frac{1}{4}ie^{-i(3\alpha+\beta)}\right)}{\sqrt{2}}\right)}{\sqrt{2}}\\
%+
%\frac{e^{-i\alpha}\left(\frac{e^{-i\alpha}\left(-\frac{1}{4}e^{-2i\alpha}+\frac{1}{4}e^{-4i\alpha}\right)}{\sqrt{2}}+\frac{ie^{i\beta}\left(\frac{1}{4}ie^{-i(3\alpha+\beta)}-\frac{ie^{-i\beta}}{2\sqrt{2}} %\right)}{\sqrt{2}}
%\right)}{\sqrt{2}}.
a_{111} = \frac{e^{-i\alpha}}{2\sqrt{2}}+\frac{1}{8}\left(
e^{-2i\alpha} -3 e^{-4i\alpha} + e^{-6i\alpha}\right)
\end{equation}
\textup{Now, if we make a natural assumption that $\alpha$ and
$\beta$ are uniformly distributed, the expected probability
$E[|\langle 111|\psi_{2}\rangle|^2]$ is given by
\begin{equation}\label{result}
E[|\langle 111|\psi_{2}\rangle|^2] =
\frac{1}{4\pi^2}\int\limits_{0}^{2\pi}
\int\limits_{0}^{2\pi}|a_{111}|^2d\alpha d\beta = \frac{19}{64}.
\end{equation}
To compare result (\ref{result}) with the classical case, note
that the measurement on state $|\psi_{1}\rangle$ was not
performed. Thus, we should compare (\ref{result}) only with a
classical probability of the result of the second round, (given
that the result of the first round allows players to play one more
time), and such a probability is equal to 1/8. Thus, in the
quantum domain, the players all are more than twice as likely to
survive the second round. \\
In order to make another comparison of quantum Russian roulette
with the classical one, let us consider that not only $\alpha$ and
$\beta$ parameters are uniformly distributed, but all angles
$\gamma_i$ are also uniformly distributed. Now the probability for
our 3-person two rounds, quantum game is given by,
\begin{equation}
P_{3,2} = \frac{1}{4\pi^8} \int\limits_{0}^{2\pi}
\int\limits_{0}^{2\pi}d\alpha d\beta \int_0^\pi \int_0^\pi
\int_0^\pi \int_0^\pi \int_0^\pi \int_0^\pi |\langle
111|\psi_2\rangle|^2 d\gamma_1\cdots  d\gamma_6\label{int-gammas}.
\end{equation}
One can investigate similar problems when there are 4- and
5-persons in the gamble, which yield 8-fold and 10-fold angular
integrals analogous to (\ref{int-gammas}). Our results are
summarized in the table 1 and one can infer that the quantum
version produces greater probabilities for an ``all alive''
outcome.}\vspace{12pt}

\noindent {\bf Table 1} Probabilities for all players being alive
after the second round when parameters $\alpha$ and $\beta$ as
well as angles $\gamma_i$ are uniformly distributed. Quantum
version produces greater probabilities.
\\
\begin{center}
\begin{tabular}{|l|l|l|} \hline
Number of players&Quantum&Classical\\
\hline
3&$\frac{1}{8}+ \frac{1}{64}+\frac{7}{2\pi^4}$ & $\frac{1}{8}$\\
4&$\frac{1}{16}+\frac{1}{256}+\frac{6}{\pi^8}+\frac{23}{8\pi^4}$ & $\frac{1}{16}$ \\
5&$\frac{1}{32}+\frac{1}{1024}+\frac{15}{\pi^8}+\frac{77}{32\pi^4}$
& $\frac{1}{32}$\\ \hline
\end{tabular} \end{center}
\end{example}
% % % % % Example: one player has full bullets, and the other ones have no bullets at all
\begin{example}\textup{
Another interesting example, concerning the 3-person quantum
Russian roulette, takes place when the referee is unfair and
prepares the gun with no bullets for two players, and the third
players' chambers are fully loaded. The first round obviously
reproduces the classical result since the player shoot himself and
ends in the dead state; the other two players remain in the alive
state. This dead state can be flipped back to $|1\rangle$ in the
second round --- in all three cases namely $\gamma_1=\gamma_4=0$
or $\gamma_2=\gamma_5=0$ or $\gamma_3=\gamma_6=0$ --- the final
state is proportional to $|111\rangle$, i.e., all player end alive
with probability equal to unity. The third round yields the same
classical result as well as the fourth round return to the second
one and so forth. A player with fully loaded chambers kills
himself after an odd number of rounds if final measurement takes
place, and remains alive if an even number of rounds were played.
}
\end{example}
% % % Example: several rounds
\begin{figure}[t]
\centering
\includegraphics[scale=1]{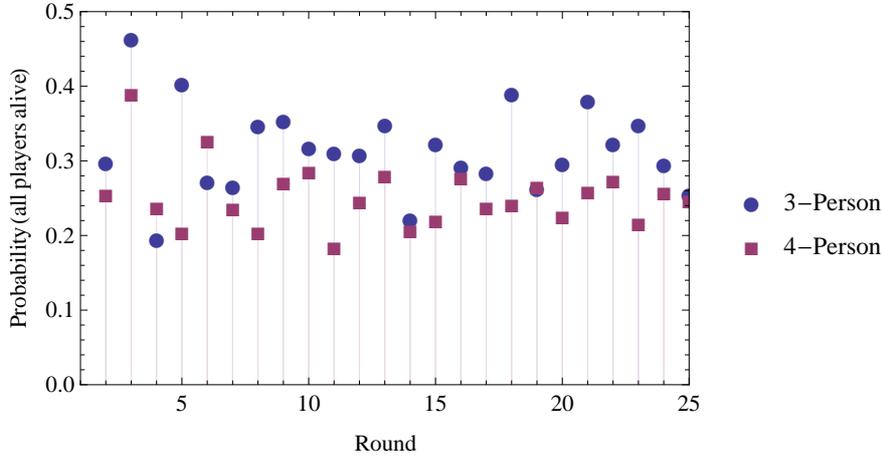}
%\resizebox*{0.72\textwidth}{0.36\textheight}{\includegraphics[scale=1]{figpay6_111_w_ghz.jpg}}
\caption{\label{figure1}Plot of the probabilities $|\langle
a_{111}|\psi_{n}\rangle|^2$ and $|\langle
a_{1111}|\psi_{n}\rangle|^2$ as a function of the number of rounds
$n$ as studied in example 3.5.}
\end{figure}
\begin{figure}[here]
\centering
\includegraphics[scale=1]{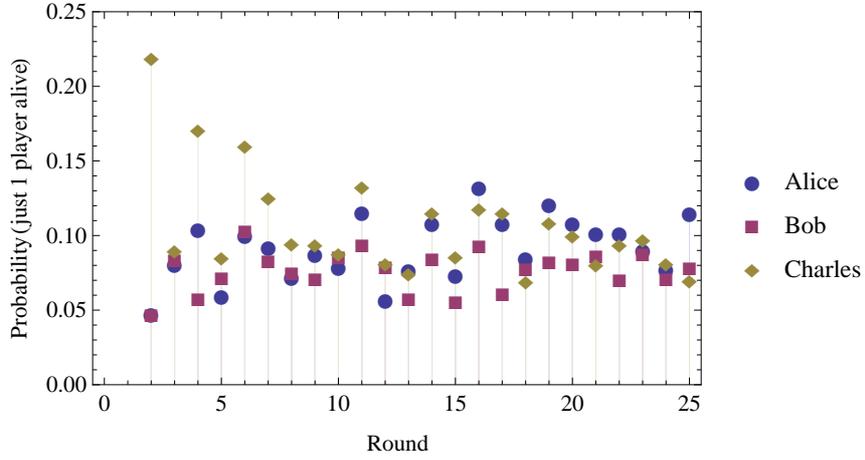}
%\resizebox*{0.72\textwidth}{0.36\textheight}{\includegraphics[scale=1]{figpay6_111_w_ghz.jpg}}
\caption{Plot of the probabilities $|\langle
a_{100}|\psi_{n}\rangle|^2$ (Alice), $|\langle
a_{010}|\psi_{n}\rangle|^2$ (Bob) and $|\langle
a_{001}|\psi_{n}\rangle|^2$ (Charles) as a function of the number
of rounds $n$ as studied in example 3.5.} \label{figure2}
\end{figure}
\begin{example}
\textup{As a final example let us study what would happen if the
game was played several times, e.g., 25 rounds. Firstly let us
consider the case similar to example 3.3 where players have chance
of 50\% of having a bullet inside their chambers. One observes in
figure~\ref{figure1} that the maximum probability takes place in
the third round and does not exceed 48\%. On the other hand, we
infer from figure~\ref{figure2} that the last player to shot is
the one who is more likely to survive alone, namely his average
chance is $10.6\%$. Analogous conclusions can be inferred for the
4-person game where the maximum probability for an ``all alive''
outcome is around $40\%$ and occurs at the third round, see
figure~\ref{figure3}; as well as the last player has a slightly
greater chance of surviving alone, i.e., $6.2\%$. In both cases
the second player to shot is the one who has worst probabilities
of defeating
his opponents: $7.7\%$ and $3.9\%$ for 3-person and 4-person quantum Russian roulette respectively.\\
Conversely, if all the players have only one bullet inside their
chambers, and that each bullet is smeared out uniformly inside
each chamber, i.e., in a $n-$round game each player has
$\gamma_n=2\arccos{\frac{1}{\sqrt{n}}}$. Integrating over $\alpha$
and $\beta$ one observes that the probability of surviving
increases --- for 3- and 4-person the these probabilities exceed
80\% --- after each round since the players have smaller chance of
firing themselves, a problem analogous to the well-known
bomb-quest \cite{vaidman}.}
\begin{figure}[t]
\centering
\includegraphics[scale=1]{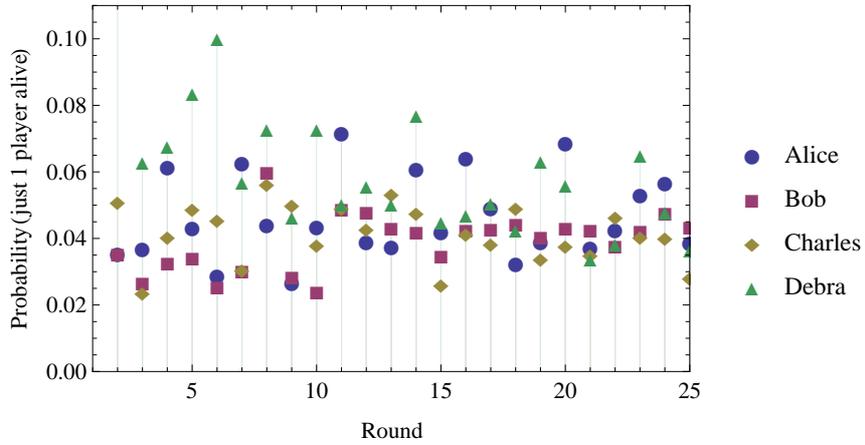}
\caption{Plot of the probabilities $|\langle
a_{1000}|\psi_{n}\rangle|^2$ (Alice), $|\langle
a_{0100}|\psi_{n}\rangle|^2$ (Bob), $|\langle
a_{0010}|\psi_{n}\rangle|^2$ (Charles) and $|\langle
a_{0001}|\psi_{n}\rangle|^2$ (Debra) as a function of the number
of rounds $n$ as studied in example 3.5.} \label{figure3}
\end{figure}
\end{example}

\section{Conclusion}
We generalized the two-person quantum Russian roulette game for an
arbitrary number of players. Our operators (\ref{trzyoperators})
allow for any preparation of the gun. We applied our results to
four  representative examples and compared the quantum game with
the classical one. In general the quantum game yields better
outcomes for the players.

\section*{Acknowledgments}
Work by Piotr Fr\c{a}ckiewicz was supported by the Polish National
Science Center under the project number DEC-2011/01/B/ST6/07197.
AGMS gratefully acknowledges FAPERJ (Funda\c c\~ao Carlos Chagas
Filho de Amparo \`a Pesquisa do Estado do Rio de Janeiro, grants
E26.110.192/2011 and E26.110.321/2012) and INCT-IQ for partial
financial support.

\end{document}